\algrenewcommand\algorithmicrequire{\textbf{Input:}}
\algrenewcommand\algorithmicensure{\textbf{Output:}}
\algnewcommand{\algorithmicand}{\textbf{ and }}
\algnewcommand{\algorithmicor}{\textbf{ or }}
\algnewcommand{\OR}{\algorithmicor}
\algnewcommand{\AND}{\algorithmicand}
\newtheorem{theorem}{Theorem}[section]
\newtheorem{corollary}[theorem]{Corollary}
\newtheorem{definition}[theorem]{Definition}
\newtheorem{remark}[theorem]{Remark}
\newtheorem{lemma}[theorem]{Lemma}
\newtheorem{example}[theorem]{Example}
\pgfplotsset{compat=1.14} 
\journal{European Journal of Operational Research}
\begin{document}

\begin{frontmatter}



\title{The \(\{0,1\}\)-knapsack problem with qualitative levels}

\renewcommand*{\thefootnote}{\fnsymbol{footnote}}
\author[a]{Luca E. Schäfer\corref{mycorrespondingauthor}}
\cortext[mycorrespondingauthor]{Corresponding author}
\ead{luca.schaefer@mathematik.uni-kl.de}
\author[a]{Tobias Dietz}
\ead{dietz@mathematik.uni-kl.de}
\author[b]{Maria Barbati}
\ead{maria.barbati@port.ac.uk}
\author[c]{Jos\'{e} Rui Figueira}
\ead{figueira@tecnico.ulisboa.pt}
\author[b,d]{Salvatore Greco}
\ead{salgreco@unict.it}
\author[a]{Stefan Ruzika}
\ead{ruzika@mathematik.uni-kl.de}

\address[a]{Department of Mathematics, Technische Universität Kaiserslautern, 67663 Kaiserslautern, Germany}
\address[b]{Department of Economics and Business, University of Catania, 95129 Catania, Italy}
\address[c]{CEG-IST, Instituto Superior T\'{e}cnico, Universidade de Lisboa, 1049-001 Lisboa, Portugal}
\address[d]{Portsmouth Business School, Centre of Operations Research and Logistics (CORL), \\ University of Portsmouth,  PO1 3DE Portsmouth, United Kingdom}

\begin{abstract}
A variant of the classical knapsack problem is considered in which each item is associated with an integer weight and a qualitative level. We define a dominance relation over the feasible subsets of the given item set and show that this relation defines a preorder. We propose a dynamic programming algorithm to compute the entire set of non-dominated rank cardinality vectors and we state  two greedy algorithms, which efficiently compute a single efficient solution. 
\end{abstract}



\begin{keyword}
Computing science \sep Knapsack problem \sep Non-dominance \sep Ordinal levels \sep
Dynamic programming


\end{keyword}

\end{frontmatter}



\section{Introduction}
In the knapsack problem, each item has an associated profit and a weight value. The goal is to maximize the overall profit of the selected items under the constraint that the sum of the weights associated with the selected items does not exceed the knapsack capacity \cite{kellerer2004multidimensional, martello2000new}.

The profit and the weight values of the items are usually assumed to be positive values and contribute to the definition of the objective function according to a quantitative evaluation. However, the profit (but also the weight values) of a knapsack problem can have a qualitative content \cite{kasperski2010minmax}. For example, this is the case for a knapsack problem related to urban and territorial planning projects for which the profit expresses the overall environmental sustainability. In this context, it can be reasonable to use ordinal qualitative evaluations (such as “bad, medium, good”), because of the difficulty to assess numerical evaluations, which can be only apparently more precise and, instead, are always arbitrary to some extent.

Indeed, in typical real-world applications the need of introducing qualitative evaluations emerges.  For example, for the selection of research and development projects, it is important to consider not only the financial benefits of the projects but also their sustainability in terms of human and environmental values \cite{vandaele2013sustainable}. Similarly, when selecting infrastructures or urban planning projects  their benefits, in terms of social aspects or environmental impact of the projects, can be hard to quantify \cite{mcgreevy2017complexity}. Furthermore, when  selecting technologies in a company  the adoption of qualitative evaluations can be necessary given the scarcity of information to produce exact quantitative evaluations \cite{saen2006decision}.

Although there exists a vast literature on the knapsack problem,  a formulation that considers the qualitative benefits of the items is still missing. Nevertheless, the need to introduce imprecise or stochastic evaluations has been treated in different works, in which profits and weights associated with the items are different from a generic positive quantitative measure. 
Among those,  fuzzy approaches are quite popular. In this sense, \cite{lin2001using} formulated a fuzzy knapsack problem where the weight of each item is imprecise in the sense that it can be less than or greater than a fixed value. Later, they have proposed genetic algorithms to handle such problems \cite{lin2008solving}. Similarly, trapezoidal fuzzy intervals are used to model imprecision in weights and profits of the items, see \cite{kasperski20070}. Alike, fuzzy triangular numbers were exploited for defining the profits and the weights of the constrained knapsack problem, where  a discount applies when a given quantity of an item is inserted in the knapsack, cf. \cite{changdar2015improved}. 

Likewise, stochastic considerations have also been taken into account for the profits and the weights of items. For example, in \cite{dean2004approximating} the profits are deterministic but the weights  are independent random variables. The items are selected  sequentially, and when inserted in the knapsack their size is determined. Then, the expected value of the profit is maximized. On the same perspective,  \cite{yazidi2018aggregation} formulates the non-linear equality fractional knapsack problem where the profit associated with each item depends on the quantity that is inserted in the knapsack.
Alternatively,  \cite{sbihi2010cooperative} assumed that items could have a different profit value according to different scenarios. They have analyzed a set of feasible solutions for all the possible scenarios. Then, the worst scenario, i.e., the knapsack  maximizing the worst possible outcome, is identified. 

Additionally, an alternative approach is to  define a so-called parametric knapsack problem  where the profits of the items are formulated as affine-linear functions of real-valued variables \cite{giudici2017approximation}. The authors proposed an approximation scheme in order to obtain the optimal solutions of the problem for all the values of the parameter  within a given interval.  

Further, the introduction of stochastic or imprecise weights and profits has been approached in a multicriteria formulation where the selection of items happens according not only to one single criterion but to more criteria, see e.g., \cite{bazgan2009solving}. In this sense, criteria with qualitative benefits are often employed for the selection of a set (portfolio) of items (projects) \cite{salo2011portfolio}. In this context, there are two main approaches.
One is to assign a numerical evaluation to qualitative benefits through elicitation of a value function, cf. \cite{keeney1999identifying}.  Another approach is to assign projects to ordered classes and select the projects in the best classes subject to some given constraints and requirements \cite{stal2011application}.
Therefore, in the former case, after defining the value function, the problem can be reformulated in terms of the classical knapsack problem, while in the latter case, an approach quite different to the classical knapsack formulation is adopted.
For an interesting extension of the former approach see \cite{lahdelma2003ordinal}, which describes a stochastic multicriteria acceptability analysis for group decision making with the aim of ranking and selecting a set of  waste management projects optimizing both quantitative and qualitative criteria.
Later, the evaluations on the criteria  were modeled through score intervals which are assumed to include the true value  and all the non-dominated portfolios were computed by means of a preference programming algorithm \cite{liesio2007preference}. Following that,  some modifications on the interdependencies of the projects were added in \cite{liesio2008robust}.

Besides,  fuzzy numbers were used in a multicriteria context as in \cite{relich2017fuzzy} where a  fuzzy weighted average approach is employed  to select  new product development projects. Alternatively, in  \cite{chang2012fuzzy}, the fuzziness of the profits for the projects are expressed through the definition of a data envelopment analysis approach.
Moreover, the use of ordered weighted averaging  operators is endorsed to deal with the vagueness of the contribution of several research funding programs in \cite{wang2013vague}. 

Finally, a promising approach strongly related to the qualitative optimization knapsack problem considered in this paper, was introduced in \cite{barbati2018optimization} where the evaluations of the items are linked to a set of predefined qualitative benefit levels. More precisely, an item can attain a certain benefit level according to the fact that its evaluation is above or below a given threshold; in this way, the evaluation of the items becomes an ordinal evaluation. Then, the preferred knapsack is obtained by means of a multiobjective optimization problem whose objectives to optimize are the number of items that attain the considered benefit levels.   
The differences with respect to the approach presented in this paper are mainly the following three:
First, while in \cite{barbati2018optimization} a mulitobjective knapsack problem, cf. \cite{lust2012multiobjective}, is considered, we consider in this paper a single objective knpasack problem.  Second, in \cite{barbati2018optimization} a single "best" solution is searched through interaction with the decision maker, while in our approach we look for the entire set of solutions that are optimal with respect to numerical representations preserving the order of the ordinal evaluations.
Third, in \cite{barbati2018optimization}, in the context of interactive multiobjective optimization, constraints can be added representing specific requirements expressing preferences of the decision maker in terms of minimum satisfaction levels on the considered objectives. In this paper, we consider simply the usual capacity constraint (which, of course, do not prevent to extend the approach we are proposing also in case of further constraints beyond the capacity).

From the above literature review, the necessity of dealing with imprecise or vague evaluations is evident. However, the adoption of ordinal evaluations in combinatorial optimization is very rare \cite{schafer2020shortest} and, to the best of our knowledge, a formulation of a knapsack problem with qualitative evaluations has not yet been provided. Our paper addresses this gap and, therefore,  further expands the potential applications of  the knapsack problem.

The paper is organized as follows. 
In Section~\ref{sec:notation}, we define preliminaries and introduce the notation necessary to formulate the model. In Section~\ref{sec:knapsack}, two greedy algorithms computing a single efficient solution and a dynamic programming procedure to list all the non-dominated rank cardinality vectors for the knapsack problem with qualitative levels are proposed. Finally, Section~\ref{sec:conclusion} concludes the paper.

\section{Preliminaries and Notation}\label{sec:notation}
Let \(\mathcal{S}=\{s_1,\ldots,s_n\}\) denote a set of items, let \(\mathcal{L}=\{\ell_1,\ldots,\ell_k\}\) denote a set of qualitative levels with \(\ell_1\prec \ell_2 \prec \dots \prec \ell_k\) and let \(W \in \mathbb{N}\) denote the knapsack capacity. With \(\ell_i \prec \ell_{i+j}, i = 1,\ldots, k-1\) and \(j=1,\ldots,k-i\), we indicate that \(\ell_{i+j}\) is strictly better than \(\ell_{i}\) for all \(j = 1,\ldots,k-i\). Furthermore, let \(w \colon \mathcal{S} \rightarrow \mathbb{N}\) denote a function assigning a weight to each item \(s_i \in \mathcal{S}\) and \(r \colon \mathcal{S} \rightarrow \mathcal{L}\) denote a rank function assigning a qualitative  level to each item \(s_i \in \mathcal{S}\). Finally,  \(\mathcal{S}(W)= \{S' \subseteq \mathcal{S} \mid w(S') \leq  W\}\) is used to denote the set of all  feasible subsets of \(\mathcal{S}\) satisfying the capacity \(W \in \mathbb{N}\), where \(w(S') = \sum_{s \in S'} w(s)\).

For the purpose of item sets with qualitative levels, we recall some definitions of binary relations, cf. \cite{ehrg}.
A binary relation on \(\mathcal{L}\) is a subset \(\mathcal{R}\) on \(\mathcal{L} \times \mathcal{L}\). For any $\ell,\ell^\prime \in \mathcal{L}$, $(\ell,\ell^\prime) \in \mathcal{R}$ can also be denoted as $\ell \mathcal{R} \ell^\prime$.  
\begin{definition}[Properties of a binary relation]
	A binary relation \(\mathcal{R}\) on \(\mathcal{L}\) is called
	\begin{itemize}
		\item[1)] reflexive, if \((\ell,\ell) \in \mathcal{R}\) for all \(\ell \in \mathcal{L}\)
		\item[2)] transitive, if \((\ell_1,\ell_2) \in \mathcal{R}\) and \((\ell_2,\ell_3) \in \mathcal{R} \text{ implies } (\ell_1,\ell_3) \in \mathcal{R}\) for all \(\ell_1,\ell_2,\ell_3 \in \mathcal{L}\)
		\item[3)] antisymmetric, if \((\ell_1,\ell_2) \in \mathcal{R}\) and \((\ell_2,\ell_1) \in \mathcal{R} \text{ implies } \ell_1 = \ell_2\) for all \(\ell_1,\ell_2 \in \mathcal{L}\)
	\end{itemize}
\end{definition}

A binary relation \(\mathcal{R}\) on \(\mathcal{L}\) is called a preorder, if it is reflexive and transitive and it is called a partial order, if it is reflexive, transitive and antisymmetric.

Given a preorder \(\preceq\) on \(\mathcal{L}\), two additional relations can be defined as follows.
\begin{align*}
& \ell_1 \prec \ell_2 :\Leftrightarrow \ell_1 \preceq \ell_2 \text{ and } \ell_2 \not\preceq \ell_1 \text{ (asymmetric part of} \preceq)\\
& \ell_1 \sim \ell_2 :\Leftrightarrow \ell_1 \preceq \ell_2 \text{ and } \ell_2 \preceq \ell_1 \text{ (symmetric part of} \preceq).
\end{align*}

In the following, we use the subsequent definition of a numerical representation. For a survey on preference structures and their numerical representations see \cite{fishburn1999preference}.
\begin{definition}[Numerical representation] Let \(S \subseteq\mathcal{S}\) be a subset of the item set, \(\mathcal{L}\) a set of qualitative levels and let \(r:\mathcal{S}\rightarrow \mathcal{L}\) be a rank function. A function \(v: \mathcal{L} \rightarrow \mathbb{Q}^+\) is called a numerical representation with respect to the rank function \(r\) if 
	\begin{align*}
	&r(s_1) \succ r(s_2) \Leftrightarrow v(r(s_1)) > v(r(s_2)), \text{ for all } s_1,s_2 \in S \text{ and}\\
	&r(s_1) \sim r(s_2) \Leftrightarrow v(r(s_1)) = v(r(s_2)), \text{ for all } s_1,s_2 \in S.
	\end{align*}
	Every numerical representation preserves the order of the rank function. Let \(\mathcal{V}_r\) denote the set of all numerical representations with respect to \(r\).
\end{definition}

\begin{definition}[Rank cardinality function]\label{def:rankfunction}
	Let \(S\subseteq\mathcal{S}\) be a subset of the item set, let \(r:\mathcal{S}\rightarrow~\mathcal{L}\) be a rank function and let \(v:\mathcal{L}\rightarrow \mathbb{Q}^+\) be a numerical representation with \(k\) being the number of qualitative levels. The rank cardinality function \(g_i \colon 2^{\mathcal{S}} \rightarrow \mathbb{N}\) is given by
	\begin{equation*}
	g_i(S) = |\{s \in S: r(s) = \ell_i\}| \text{ for } i = 1,\ldots,k.
	\end{equation*}
	and denotes the number of items in \(S\) with \(\ell_i\) being its qualitative level.
	We call \(g(S) \coloneqq (g_1(S),\ldots,g_k(S))^\top\) the rank cardinality vector of \(S \subseteq \mathcal{S}\).
	Further, for \(S \subseteq \mathcal{S}\), we define \(v(S):= \ell_v \cdot g(S)\), where \(\ell_v := (v(\ell_1),\ldots,v(\ell_k))\), which denotes the total value of \(S\) with respect to the numerical representation \(v\).
\end{definition}

\begin{definition}[Efficiency/ Dominance]\label{def:dominance}
	Let \(S_1, S_2 \in \mathcal{S}(W)\) be feasible subsets of the item set for some \(W \in \mathbb{N}\). Then, 
	\begin{itemize}
		\item[1)] \(S_1\) weakly dominates \(S_2\), denoted by \(S_1 \succeq S_2\), if and only if for every \(v \in \mathcal{V}_r\), it holds that \(v(S_1) \geq v(S_2)\).
		\item[2)] \(g(S_1)\) weakly dominates \(g(S_2)\), denoted by \(g(S_1) \succeq g(S_2)\), if  \(S_1 \succeq S_2\),
		\item[3)]  \(S_1\) dominates \(S_2\), denoted by \(S_1 \succ S_2\), if and only if \(S_1\) weakly dominates \(S_2\) and there exists \(v^* \in \mathcal{V}_r\) such that \(v^*(S_1) > v^*(S_2)\).
		\item[4)] \(S^* \in \mathcal{S}(W)\) is called efficient, if there does not exist some \(S \in\mathcal{S}(W)\) with \(S \succ S^*\),
		\item[5)] \(g(S^*)\) is called non-dominated rank cardinality vector, if \(S^*\) is efficient.
	\end{itemize}
\end{definition}

\begin{definition}[Equivalence]
	Let \(S_1, S_2 \subseteq \mathcal{S}\) be two subsets of the item set. \(S_1 \text{ and } S_2\) are called equivalent if and only if \(v(S_1) = v(S_2)\) for all \(v \in \mathcal{V}_r\).
\end{definition}

\begin{remark}\label{re:dom}
Note that we can rewrite the definition of dominance in the following way. \(S_1\) dominates \(S_2\) if and only if \(S_1\) weakly dominates \(S_2\) and \(S_2\) does not weakly dominate \(S_1\).
\end{remark}

Further, \(S_1\) and \(S_2\) are equivalent, if \(S_1 \succeq S_2\) and \(S_1 \not\succ S_2\).

\begin{lemma}
	The dominance relation \(\succeq\) defined on the set of feasible subsets \(\mathcal{S}(W)\) for some \(W \in \mathbb{N}\)  is a preorder.
\end{lemma}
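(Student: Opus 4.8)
The plan is to verify directly the two defining properties of a preorder, namely reflexivity and transitivity, by unfolding Definition~\ref{def:dominance}(1) and then exploiting that the ordinary order \(\geq\) on \(\mathbb{Q}^+\) is itself reflexive and transitive. The whole argument should reduce to observing that these two properties of \(\geq\) survive the universal quantification over \(v \in \mathcal{V}_r\) that appears in the definition of \(\succeq\). Concretely, recall that \(S_1 \succeq S_2\) means precisely that \(v(S_1) \geq v(S_2)\) for \emph{every} numerical representation \(v \in \mathcal{V}_r\), so statements about \(\succeq\) translate into statements about families of scalar inequalities indexed by \(\mathcal{V}_r\).

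First I would establish reflexivity. For any fixed \(S \in \mathcal{S}(W)\) and any \(v \in \mathcal{V}_r\), the value \(v(S)\) is a single rational number, so \(v(S) \geq v(S)\) holds by reflexivity of \(\geq\). Since this is true for every \(v \in \mathcal{V}_r\), the defining condition for \(S \succeq S\) is met. Next I would establish transitivity. Assume \(S_1 \succeq S_2\) and \(S_2 \succeq S_3\). Fix an arbitrary \(v \in \mathcal{V}_r\); the two hypotheses give \(v(S_1) \geq v(S_2)\) and \(v(S_2) \geq v(S_3)\), and transitivity of \(\geq\) on \(\mathbb{Q}^+\) yields \(v(S_1) \geq v(S_3)\). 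As \(v\) was arbitrary, this inequality holds for all \(v \in \mathcal{V}_r\), which is exactly the condition \(S_1 \succeq S_3\). Together these two facts show that \(\succeq\) is a preorder on \(\mathcal{S}(W)\).

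There is no genuine obstacle here; the result is essentially a lifting of the order-theoretic properties of \(\geq\) through a universal quantifier, and the only point that merits a line of care is that the quantified condition behaves correctly when no further structure on \(\mathcal{V}_r\) is assumed. In particular I would note that reflexivity does not even require \(\mathcal{V}_r\) to be nonempty, and that transitivity only uses the pointwise transitivity of \(\geq\) for each individual representation rather than any relation between distinct elements of \(\mathcal{V}_r\). I would also remark that \(\succeq\) need not be antisymmetric, which is why one obtains a preorder rather than a partial order: two distinct but equivalent subsets \(S_1 \neq S_2\) can satisfy both \(S_1 \succeq S_2\) and \(S_2 \succeq S_1\), consistent with the notion of equivalence introduced above.
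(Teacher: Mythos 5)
Your proposal is correct and follows essentially the same route as the paper: both arguments unfold the definition of \(\succeq\) as a universally quantified family of scalar inequalities and lift reflexivity and transitivity of \(\geq\) on \(\mathbb{Q}^+\) pointwise over \(v \in \mathcal{V}_r\). Your additional remarks on antisymmetry and the nonemptiness of \(\mathcal{V}_r\) are accurate but not needed for the statement.
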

\begin{proof}
	Obviously \(\succeq\) is reflexive since \(v(S) \geq v(S)\) for every \(v\in \mathcal{V}_r\) and all feasible subsets \(S \in \mathcal{S}(W)\). Further, \(\succeq\) is transitive, since for \(S_1, S_2, S_3\in \mathcal{S}(W)\) with \(v(S_1) \geq v(S_2)\) and \(v(S_2) \geq v(S_3)\) for every \(v\in \mathcal{V}_r\), it holds that \(v(S_1) \geq v(S_3)\) due to Definition \ref{def:rankfunction}. Consequently, \(S_1 \succeq S_3\).
\end{proof}

\section{The \(\{0,1\}\)-knapsack problem with qualitative levels}\label{sec:knapsack}
In this section, we introduce the knapsack problem with qualitative levels. An instance \(I=(\mathcal{S},r,w,W)\) of that problem is given by a set of items \(\mathcal{S}\), a rank function \(r\), a weight function \(w\) and a knapsack capacity \(W\).
Given such an instance \(I\), we aim to find all non-dominated rank cardinality vectors, see Definition \ref{def:dominance}.
Further, we assume the number of qualitative levels \(k\) to be fixed.

\begin{lemma}\label{cor:dom}
	Let \(S_1, S_2 \subseteq \mathcal{S}(W)\) be two feasible subsets of the item set for some \(W \in \mathbb{N}\). Then \(S_1\) weakly dominates \(S_2\), i.e., \(S_1 \succeq S_2\), if and only if \( \sum_{i=j}^k g_i(S_1) \geq\sum_{i=j}^k g_i(S_2)\) for all \(j= 1,\ldots,k\).
\end{lemma}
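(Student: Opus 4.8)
The plan is to convert the universally quantified condition ``$v(S_1)\ge v(S_2)$ for every $v\in\mathcal{V}_r$'' into the finite tail-sum condition by a summation-by-parts argument. First I would pass to the difference vector $d_i := g_i(S_1)-g_i(S_2)$ for $i=1,\ldots,k$. Since $v(S)=\ell_v\cdot g(S)=\sum_{i=1}^k v(\ell_i)\,g_i(S)$, we have $v(S_1)-v(S_2)=\sum_{i=1}^k v(\ell_i)\,d_i$ for every $v$. Thus $S_1\succeq S_2$ reads $\sum_{i=1}^k v(\ell_i)\,d_i\ge 0$ for all $v\in\mathcal{V}_r$, while the claimed condition reads $G_j:=\sum_{i=j}^k d_i\ge 0$ for all $j=1,\ldots,k$.

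The key step is a clean description of $\mathcal{V}_r$. Because the levels form a chain $\ell_1\prec\cdots\prec\ell_k$ with no two distinct levels equivalent, a map $v$ is a numerical representation exactly when $0<v(\ell_1)<v(\ell_2)<\cdots<v(\ell_k)$ with values in $\mathbb{Q}^+$. Setting $c_1:=v(\ell_1)$ and $c_i:=v(\ell_i)-v(\ell_{i-1})$ for $i\ge 2$ yields a bijection between such representations $v$ and vectors $(c_1,\ldots,c_k)$ of strictly positive rationals, with $v(\ell_i)=\sum_{m=1}^i c_m$. Interchanging the order of summation (equivalently, Abel summation) then gives the identity $\sum_{i=1}^k v(\ell_i)\,d_i=\sum_{i=1}^k\big(\sum_{m=1}^i c_m\big)d_i=\sum_{m=1}^k c_m\,G_m$.

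Given this identity the two directions are short. For the ``if'' direction, assume $G_j\ge 0$ for all $j$; since every $c_m>0$, each term $c_m G_m$ is nonnegative, so $\sum_m c_m G_m\ge 0$ for every representation, i.e. $S_1\succeq S_2$. For the ``only if'' direction I would argue by contraposition: suppose $G_j<0$ for some $j$. I would pick the representation with $c_j=1$ and $c_i=\varepsilon$ for $i\ne j$, giving $\sum_m c_m G_m=G_j+\varepsilon\sum_{i\ne j}G_i$; choosing a sufficiently small rational $\varepsilon>0$ makes this negative, so $v(S_1)<v(S_2)$ and weak dominance fails.

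The hard part will be the ``only if'' direction, and precisely the requirement that the isolating weights correspond to a \emph{legitimate} representation, namely that the induced values $v(\ell_i)=\sum_{m\le i}c_m$ are strictly increasing and lie in $\mathbb{Q}^+$. The bijection established in the second step is exactly what licenses the construction: any positive rational choice of the $c_m$ is admissible, so isolating a single $G_j$ is unproblematic. Everything else is routine bookkeeping with the summation identity.
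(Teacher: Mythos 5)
Your proof is correct and follows essentially the same route as the paper's: the ``if'' direction is the same Abel-summation identity expressing $v(S)$ through the tail counts $\widetilde{g}_i(S)=\sum_{m\ge i}g_m(S)$ with nonnegative increments $v(\ell_i)-v(\ell_{i-1})$, and the ``only if'' direction is the same contrapositive construction of a representation with a large jump at the offending level (the paper scales that jump up by $M=4\cdot|\mathcal{S}|\cdot k$ where you scale the other increments down by $\varepsilon$; the two are equivalent). Your increment-coordinate change of variables makes the witness computation a one-liner rather than the paper's chain of inequalities, but it is a presentational difference, not a different argument.
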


\begin{proof}
	Let \(S_1, S_2 \subseteq \mathcal{S}(W)\) for some \(W \in \mathbb{N}\) and let \(j^*\in\{1,\ldots,k\}\) with \( \sum_{i=j^*}^k g_i(S_1) < \sum_{i=j^*}^k g_i(S_2)\). Further, we set \(M = 4\cdot|\mathcal{S}|\cdot k\) and define the following numerical representation \(v: \mathcal{L}\rightarrow \mathbb{Q}_+\):	
	\begin{equation}\label{eq:star}
	v(\ell_i) =
	\begin{cases}
	i+M & \text{if } i\geq j^* \\
	i & \text{if } i<j^*  
	\end{cases}
	,\ i=1,\ldots,k.
	\end{equation}
	To obtain that \(S_1\) does not weakly dominate \(S_2\), we observe that
	\begin{equation*}
	\frac{M}{2}  \geq \left(\sum_{i=j^*}^k g_i(S_2) + \sum_{i=1}^{j^*-1} g_i(S_1)\right) k > k  \sum_{i=j^*}^k g_i(S_2) + (j^*-1)  \sum_{i=1}^{j^*-1} g_i(S_1), 
	\end{equation*}
	where the last inequality follows from the fact that \(j^*-1<k\). In particular it holds
	\begin{equation}\label{eq:starstar}
		\frac{M+k}{2} \geq k  \sum\limits_{i=j^*}^k g_i(S_2) + (j^*-1)  \sum\limits_{i=1}^{j^*-1} g_i(S_1).
	\end{equation}
	
	Then, it holds
	\begin{align*}
	v(S_2) & = \sum_{i=1}^k g_i(S_2) v(\ell_i) \\
	& \stackrel{\eqref{eq:star}}{=} \sum_{i=1}^{j^*-1} g_i(S_2)  i + \sum_{i=j^*}^{k} g_i(S_2)  (i+M) \\
	& \geq M  \sum_{i=j^*}^k g_i(S_2) \\
	& \stackrel{\eqref{eq:starstar}}{\geq} M  \sum_{i=j^*}^k g_i(S_2)+k \sum_{i=j^*}^k g_i(S_2) + (j^*-1)  \sum_{i=1}^{j^*-1} g_i(S_1) - \frac{M+k}{2} \\
	& = (M+k)  \left(\sum_{i=j^*}^k g_i(S_2) -\frac{1}{2}\right) + (j^*-1) \sum_{i=1}^{j^*-1} g_i(S_1) \\
	& > (M+k)  \sum_{i=j^*}^k g_i(S_1) + (j^*-1) \sum_{i=1}^{j^*-1} g_i(S_1) \\
	& \geq v(S_1).
	\end{align*}
	Hence, it follows that \(S_1\nsucceq S_2\). Thus, we proved that \(S_1\succeq S_2\) implies that \( \sum_{i=j}^k g_i(S_1) \geq\sum_{i=j}^k g_i(S_2)\) for all \(j= 1,\ldots,k\).
	\newline
	
	For the other direction, let $\widetilde{g}_i(S)$, \(S\subseteq \mathcal{S}(W)\), \(i=1,\ldots,k\), denote the number of items $s \in S$ to which the rank function $r$ assigns a level not smaller than $\ell_i$, i.e., 
	\begin{equation*}
		\widetilde{g}_i(S)=|\{s \in S| r(s) \succeq \ell_i\}|.
	\end{equation*}
	Observe that $g_i(S)=\widetilde{g}_i(S) - \widetilde{g}_{i+1}(S)$  for all $i=1,\ldots,k-1$. Thus, we get
	\begin{equation}
		\begin{aligned}
		v(S)=\sum_{i=1}^kv(\ell_i)g_i(S) & = \sum_{i=1}^{k-1}v(\ell_i)\left(\widetilde{g}_i(S)-\widetilde{g}_{i+1}(S)\right)+v(\ell_k)\widetilde{g}_k(S)\\
		&= v(\ell_1)\widetilde{g}_1(S)+\sum_{i=2}^{k}\left(v(\ell_i) - v(\ell_{i-1})\right)\widetilde{g}_i(S).
		\label{sum}
		\end{aligned}
	\end{equation}

	Consequently, if $\sum_{i=j}^k g_i(S_1) \geq\sum_{i=j}^k g_i(S_2)$ for all $j= 1,\ldots,k$, for some $S_1, S_2 \in \mathcal{S}(W)$ or equivalently, $\widetilde{g}_j(S_1) \geq \widetilde{g}_j(S_2)$ for all $j= 1,\ldots,k$, and using \eqref{sum} and the fact that
	\begin{equation*}
		0 \leq v(\ell_1) \leq \ldots v(\ell_{k-1}) \leq v(\ell_k) \text{ for all } v \in \mathcal{V}_r,
	\end{equation*}
	we get that for all $v \in \mathcal{V}_r$ it holds:
	\begin{gather*}
		v(S_1)=v(\ell_1)\widetilde{g}_1(S_1)+\sum_{i=2}^{k}\left(v(\ell_i) - v(\ell_{i-1})\right)\widetilde{g}_i(S_1)\\
		\geq\\
		v(\ell_1)\widetilde{g}_1(S_2)+\sum_{i=2}^{k}\left(v(\ell_i) - v(\ell_{i-1})\right)\widetilde{g}_i(S_2)=v(S_2)
	\end{gather*}
	Thus, we proved that \(\sum_{i=j}^k g_i(S_1) \geq\sum_{i=j}^k g_i(S_2)\) for all \(j= 1,\ldots,k\) implies $S_1 \succeq S_2$, which concludes the proof.
\end{proof}

Next, we show how to obtain a single efficient solution by running a greedy algorithm. Therefore, we need the following definition of a lexicographically ordered item set.
\begin{definition}[\(r\)-lexicographical order]\label{def:rlexorder}
	Let \(S' \subseteq \mathcal{S}\) be a subset of the item set with \(S' = \{s'_1,\ldots,s'_p\}\) and let \(\pi: \{1,\ldots,p\} \rightarrow \{1,\ldots,p\}\) be a permutation. Then, the lexicographically ordered set \(S'_{r-lex}\) with respect to \(r\) is defined as \(S'_{r-lex} = \{s'_{\pi(1)},\ldots,s'_{\pi(p)}\}\) with \(r(s'_{\pi(1)}) \succeq \dots \succeq r(s'_{\pi(p)})\) and if \(r(s'_{\pi(i)}) = r(s'_{\pi(i+1)})\), then \(w(s'_{\pi(i)})\leq w(s'_{\pi(i+1)})\) for \(i=1,\ldots,p-1\). 
\end{definition}

\begin{algorithm}[H]
	\caption{Greedy Algorithm w.r.t. \(r\)}
	\textbf{Input:} An instance \(I=(\mathcal{S},r,w,W)\)\\
	\textbf{Output:} An efficient solution \(S^* \subseteq \mathcal{S}\)
	\begin{algorithmic}[1]
		\State Sort items \(s_i \in \mathcal{S}\) \(r\)-lexicographically, see Definition \ref{def:rlexorder}
		\State \(S^* \gets \emptyset\)
		\For{$i = 1,\ldots,n$}
		\If{\(w(s_i) \leq W\)}
		\State \(S^* \gets S^* \cup \{s_i\}\)
		\State \(W \gets W - w(s_i)\)
		\EndIf
		\EndFor
		\State \Return \(S^*\)
	\end{algorithmic}
	\label{alg:greedyr}
\end{algorithm}

\begin{theorem}
	The solution \(S^*\) returned by Algorithm \ref{alg:greedyr} is efficient.
\end{theorem}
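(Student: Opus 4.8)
The plan is to translate efficiency into the cumulative-count characterization of Lemma~\ref{cor:dom} and then to show that the set $S^*$ returned by Algorithm~\ref{alg:greedyr} is a lexicographic maximizer of the vector of cumulative counts, read off from the best level downwards. Throughout I write $\widetilde{g}_j(S)=\sum_{i=j}^{k}g_i(S)=|\{s\in S\mid r(s)\succeq \ell_j\}|$, as in the proof of Lemma~\ref{cor:dom}. By Remark~\ref{re:dom} together with Lemma~\ref{cor:dom}, a feasible set $S$ dominates $S^*$ (that is, $S\succ S^*$) if and only if $\widetilde{g}_j(S)\geq \widetilde{g}_j(S^*)$ for all $j=1,\ldots,k$ with at least one strict inequality. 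Hence proving that $S^*$ is efficient amounts to ruling out any such $S$.

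First I would record the structure of the greedy run. Since the items are sorted $r$-lexicographically (Definition~\ref{def:rlexorder}), Algorithm~\ref{alg:greedyr} processes all items of the best level $\ell_k$ first in non-decreasing weight, then all items of level $\ell_{k-1}$, and so on. Consequently, when the algorithm handles the items of level $\ell_j$, the current knapsack contains only items of level strictly better than $\ell_j$, and within each level the items are examined lightest-first. A short argument then shows that in the phase for level $\ell_j$ the greedy adds a maximal-cardinality, hence lightest, collection of level-$\ell_j$ items fitting into the capacity left after the strictly better levels; writing $m_j:=g_j(S^*)$ for the number of level-$\ell_j$ items selected, $S^*$ therefore contains exactly the $m_j$ lightest items of level $\ell_j$, for every $j$.

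Next I would prove the key claim by downward induction on $j$ from $k$ to $1$: among all feasible $S$ with $\widetilde{g}_i(S)=\widetilde{g}_i(S^*)$ for every $i>j$, the greedy value $\widetilde{g}_j(S^*)$ is maximal. Indeed, fixing $\widetilde{g}_i(S)=\widetilde{g}_i(S^*)$ for all $i>j$ forces the per-level cardinalities $g_i(S)=m_i$ for each $i>j$ (since $g_i$ is the difference of consecutive cumulative counts), so every such $S$ must spend on levels strictly better than $\ell_j$ at least the combined weight of the $m_i$ lightest level-$\ell_i$ items, $i>j$; this is exactly the weight that the greedy spends on those levels. Thus $S$ leaves at most the residual capacity that the greedy leaves for level $\ell_j$, and since the greedy fills this residual capacity with a maximal number of (lightest) level-$\ell_j$ items, we obtain $g_j(S)\leq m_j$ and hence $\widetilde{g}_j(S)=\widetilde{g}_{j+1}(S^*)+g_j(S)\leq \widetilde{g}_j(S^*)$.

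Finally I would assemble the contradiction. Suppose some feasible $S$ dominates $S^*$ and let $j_0$ be the largest index with $\widetilde{g}_{j_0}(S)>\widetilde{g}_{j_0}(S^*)$. By maximality of $j_0$ together with $\widetilde{g}_i(S)\geq\widetilde{g}_i(S^*)$, we get $\widetilde{g}_i(S)=\widetilde{g}_i(S^*)$ for all $i>j_0$, so the induction claim at level $j_0$ applies to $S$ and yields $\widetilde{g}_{j_0}(S)\leq\widetilde{g}_{j_0}(S^*)$, a contradiction. Hence no dominating $S$ exists and $S^*$ is efficient. The main obstacle is the inductive step: one must argue carefully that matching the cumulative counts of the strictly better levels pins down the per-level cardinalities and forces a weight expenditure on those levels at least as large as the greedy's, so that the greedy's lightest-first choices genuinely leave the most room for the level under consideration. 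The reduction via Lemma~\ref{cor:dom} and the final largest-index argument are routine by comparison.
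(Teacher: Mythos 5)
Your proposal is correct. The combinatorial heart of your argument is the same as the paper's: both exploit the fact that the $r$-lexicographic order makes the greedy process each level's items lightest-first, so that at the highest level where a competitor's per-level counts deviate from the greedy's, the greedy necessarily has strictly more items (the competitor must spend at least as much weight as the greedy on the strictly better levels to match those counts, and the greedy packs a maximum number of items of the level in question into its residual capacity). Where you diverge is in how this combinatorial fact is converted into non-dominance. The paper does \emph{not} invoke Lemma~\ref{cor:dom}; instead it hand-crafts an explicit numerical representation $v^*$ (with values $2^{-(j^*-i)}$ below the critical level $j^*$, and values of order $n$ at and above it) and verifies by a chain of inequalities that $v^*(S')<v^*(S^*)$, which directly certifies $S'\not\succeq S^*$. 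You instead route everything through the cumulative-count characterization of weak dominance from Lemma~\ref{cor:dom} together with Remark~\ref{re:dom}, reducing efficiency to the statement that $S^*$ lexicographically maximizes $(\widetilde{g}_k,\ldots,\widetilde{g}_1)$. Your route is arguably cleaner since the lemma is already available and it avoids the delicate constant-juggling in the explicit representation; the paper's route is self-contained and produces a concrete witnessing valuation. Two cosmetic points: your ``downward induction'' is not really an induction (the claim at level $j$ is proved directly from the hypothesis that the cumulative counts agree above $j$, without appeal to the claim at level $j+1$), and the phrase ``maximal-cardinality, hence lightest'' should be read as ``a weight-sorted prefix, which is simultaneously of maximal cardinality and of minimal total weight among such collections''; neither affects correctness.
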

\begin{proof}
	Let \(S' \in \mathcal{S}(W)\) be an arbitrary feasible subset of \(\mathcal{S}\). We show that \(S' \not\succeq S^*\). Therefore, we distinguish two cases.
	
	\textbf{Case 1:} \(g(S') = g(S^*)\).
	
	It follows that \(v(S') = v(S^*)\) for all \(v \in \mathcal{V}_r\) and nothing remains to show.
	
	\textbf{Case 2:} \(g(S') \neq g(S^*)\).
	
	Let \(j^*\) be maximal such that \(g_{j^*}(S')\neq g_{j^*}(S^*)\). Due to construction of Algorithm \ref{alg:greedyr}, it holds that \(g_i(S') = g_i(S^*)\) and \(w(S'_i)\geq w(S^*_i)\) for all \(i = j^*+1,\ldots,k\), where \(S'_i := \{s\in S'\mid r(s)=\ell_i\}\). Note that \(S^*_i\) is analogously defined. Consequently, it follows that 
	\begin{equation}\label{eq:greedyr}
		g_{j^*}(S')<g_{j^*}(S^*).
	\end{equation}
	Next, we define a numerical representation \(v:\mathcal{L}\rightarrow\mathbb{Q}_+\) as follows.
	\begin{equation}\label{eq:numrep}
	v(\ell_i) =
	\begin{cases}
	\frac{1}{2^{j^*-i}} & \text{if } i<j^* \\
	n & \text{if } i = j^*\\
	n+i & \text{if } i>j^*  
	\end{cases}
	\end{equation}
	It follows:
	
	\begin{align*}
	v(S')&=\sum_{i=1}^{k} g_i(S') v(\ell_i) \\
	& \stackrel{\eqref{eq:numrep}}{=} \sum_{i=1}^{j^*-1} g_i(S') \frac{1}{2^{j^*-i}} +  g_{j^*}(S')  n + \sum_{i=j^*+1}^{k} g_i(S') (n+i) \\
	&< n(g_{j^*}(S')+1) + \sum_{i=j^*+1}^{k} g_i(S') (n+i)\\
	& = n(g_{j^*}(S')+1) + \sum_{i=j^*+1}^{k} g_i(S^*) (n+i) \\
	& \stackrel{\eqref{eq:greedyr}}{\leq} n\cdot g_{j^*}(S^*) + \sum_{i=j^*+1}^{k} g_i(S^*) (n+i) \\
	& \leq \sum_{i=1}^{j^*-1} g_i(S^*) \frac{1}{2^{j^*-i}} + n g_{j^*}(S^*) + \sum_{i=j^*+1}^{k} g_i(S^*) (n+i) \\
	& = v(S^*) 
	\end{align*}
	Consequently, it is \(S' \not\succeq S^*\), which concludes the proof.
\end{proof}

\begin{theorem}
	Algorithm \ref{alg:greedyr} runs in \(\mathcal{O}(n\log n)\).
\end{theorem}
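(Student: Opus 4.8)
The plan is to bound the running time of the two phases of the algorithm separately --- the sorting in line~1 and the subsequent greedy loop in lines~3--8 --- and then add the two bounds.

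First I would analyze the sorting step. The \(r\)-lexicographical order of Definition~\ref{def:rlexorder} is a lexicographic order on the composite key consisting of the rank level \(r(s)\) (in decreasing order) and, as a tie-breaker, the weight \(w(s)\) (in increasing order). The key observation is that a single comparison of two items with respect to this order can be carried out in constant time: since the number of levels \(k\) is assumed fixed, each level \(\ell_i\) can be identified with its index \(i \in \{1,\ldots,k\}\), so comparing \(r(s)\) and \(r(s')\) reduces to comparing two bounded integers, and breaking ties by comparing \(w(s)\) and \(w(s')\) is a single integer comparison. Feeding this constant-time comparator into any standard comparison-based sorting routine (for instance merge sort or heap sort) sorts the \(n\) items using \(\mathcal{O}(n \log n)\) comparisons, hence in \(\mathcal{O}(n \log n)\) time overall.

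Next I would analyze the greedy loop, which executes exactly \(n\) iterations. In each iteration, the algorithm performs the comparison \(w(s_i) \leq W\), and in the affirmative case appends \(s_i\) to \(S^*\) and updates the residual capacity \(W\) by one subtraction. Since the items are processed in the already-sorted order, maintaining \(S^*\) as a list to which elements are appended makes each update an \(\mathcal{O}(1)\) operation, and the capacity update is a single arithmetic operation. Thus each iteration costs \(\mathcal{O}(1)\) and the loop runs in \(\mathcal{O}(n)\).

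Adding the two contributions yields a total running time of \(\mathcal{O}(n \log n) + \mathcal{O}(n) = \mathcal{O}(n \log n)\), which proves the claim. There is no genuine obstacle here: the analysis is routine once one confirms that each comparison in the sort and each bookkeeping step in the loop is constant time. The only mild point worth stating explicitly is the reliance on \(k\) being fixed, so that level comparisons are \(\mathcal{O}(1)\); with that in hand, the sorting step clearly dominates the asymptotic cost.
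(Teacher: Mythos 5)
Your proposal is correct and follows essentially the same route as the paper: bound the sorting step by \(\mathcal{O}(n\log n)\), observe that each iteration of the greedy loop costs \(\mathcal{O}(1)\) so the loop is \(\mathcal{O}(n)\), and add the two. Your version merely spells out why each comparison in the sort is constant time, which the paper leaves implicit.
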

\begin{proof}
	The sorting of the items can be done in \(\mathcal{O}(n\log n)\) time.
	The amount of work in the \(\mathbf{for}\)-loop of Algorithm \ref{alg:greedyr} is in \(\mathcal{O}(n)\), since only constant time operations are performed within each iteration. Thus, the running time follows.
\end{proof}

\begin{example}\label{ex:knapsack}
	To illustrate Algorithm \ref{alg:greedyr}, consider the following knapsack instance \(I =(\mathcal{S},r,w,W)\) with \(\mathcal{S}=\{1,2,3,4\}\) and \(W=6\). Table \ref{tab:rgreedy} shows the different items with their corresponding weights and qualitative levels.
	\begin{figure}[H]
		\begin{table}[H]
			\begin{center}
				\begin{tabular}{ c|c|c }	
					item & \(w\) & \(r\) \\ \hline
					1 & 1 & \(\ell_1\) \\ 
					2 & 2 & \(\ell_2\) \\
					3 & 3 & \(\ell_3\)\\
					4 & 4 & \(\ell_4\) 
				\end{tabular}	
			\end{center}
			\caption{Example of Algorithm \ref{alg:greedyr}}
			\label{tab:rgreedy}	
		\end{table}
	\end{figure}
	Due to Definition \ref{def:rlexorder}, it holds that \(\mathcal{S}_{r-lex} = \{4,3,2,1\}\). Consequently, the solution \(S^*\) returned by Algorithm \ref{alg:greedyr} contains the fourth and the second item, i.e., \(S^*=\{2,4\}\). One can check that \(S^*\) is efficient, see Example \ref{ex:exactKnapsack}.
\end{example}

\begin{definition}[\(w\)-lexicographical order]\label{def:wlexorder}
	Let \(S' \subseteq \mathcal{S}\) be a subset of the item set with \(S' = \{s'_1,\ldots,s'_p\}\) and let \(\pi: \{1,\ldots,p\} \rightarrow \{1,\ldots,p\}\) be a permutation. Then the lexicographically ordered set \(S'_{w-lex}\) with respect to \(w\) is defined as \(S'_{w-lex} = \{s'_{\pi(1)},\ldots,s'_{\pi(p)}\}\) with \(w(s'_{\pi(1)}) \leq \dots \leq w(s'_{\pi(p)})\) and if \(w(s'_{\pi(i)}) = w(s'_{\pi(i+1)})\), then \(r(s'_{\pi(i)}) \succeq r(s'_{\pi(i+1)})\) for \(i = 1,\ldots,p-1\). 
\end{definition}

\begin{algorithm}
	\caption{Greedy Algorithm w.r.t. \(w\)}
	\textbf{Input:} An instance \(I=(\mathcal{S},r,w,W)\)\\
	\textbf{Output:} An efficient solution \(S^* \subseteq \mathcal{S}\)
	\begin{algorithmic}[1]
		\State Sort items \(s_i \in \mathcal{S}\) \(w\)-lexicographically, see Definition \ref{def:wlexorder}
		\State \(S^* \gets \emptyset\)
		\For{$i = 1,\ldots,n$}
		\If{\(w(s_i) \leq W\)}
		\State \(S^* \gets S^* \cup \{s_i\}\)
		\State \(W \gets W - w(s_i)\)
		\EndIf
		\EndFor
		\State \Return \(S^*\)
		
	\end{algorithmic}
	\label{alg:wgreedy}
\end{algorithm}

\begin{theorem}
	The solution \(S^*\) returned by Algorithm \ref{alg:wgreedy} is efficient, if \(w(S^*)=W\).
\end{theorem}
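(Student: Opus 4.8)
The plan is to argue by contradiction: suppose some feasible \(S' \in \mathcal{S}(W)\) strictly dominates \(S^*\), i.e. \(S' \succ S^*\), and derive a contradiction. By Remark \ref{re:dom} this means \(S' \succeq S^*\) while \(S^* \not\succeq S'\). Writing \(\widetilde{g}_j(S) = \sum_{i=j}^k g_i(S)\) for the number of items of \(S\) whose level is at least \(\ell_j\) (the quantity already used in the proof of Lemma \ref{cor:dom}), Lemma \ref{cor:dom} lets me translate both facts into cumulative counts: \(\widetilde{g}_j(S') \geq \widetilde{g}_j(S^*)\) for every \(j\), with strict inequality for at least one index.

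First I would record two structural properties of the output of Algorithm \ref{alg:wgreedy} that hold irrespective of the hypothesis. Because the items are scanned in nondecreasing weight order and the scan skips an item exactly when it no longer fits, once an item is skipped every later (heavier) item is skipped as well; hence \(S^*\) is a prefix \(\{s_1, \dots, s_m\}\) of the weight-sorted list, namely the \(m\) lightest items. The same prefix property shows that no feasible set can contain more than \(m\) items, so \(S^*\) has maximum cardinality among all feasible subsets, i.e. \(\widetilde{g}_1(S^*) = m\) is maximal.

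Next I would bring in the hypothesis \(w(S^*) = W\). From \(\widetilde{g}_1(S') \ge \widetilde{g}_1(S^*) = m\) and maximum cardinality I get \(|S'| = m\); since \(S^*\) consists of the \(m\) lightest items, \(w(S^*) \le w(S')\), and combined with feasibility \(w(S') \le W = w(S^*)\) this forces \(w(S') = w(S^*)\). A short exchange argument then shows that any \(m\)-subset realizing the minimum possible total weight has the \emph{same multiset of weights} as \(S^*\), namely the \(m\) smallest weights. In particular, writing \(w^*\) for the largest weight occurring in \(S^*\) (the boundary weight), both \(S^*\) and \(S'\) must contain every item of weight strictly below \(w^*\), and each contains the same number \(c\) of items of weight exactly \(w^*\); so \(S^*\) and \(S'\) can differ only among the weight-\(w^*\) items.

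The heart of the argument is then the tie-breaking rule of Definition \ref{def:wlexorder}: among equal weights the sort prefers higher levels, so the \(c\) weight-\(w^*\) items chosen by the greedy into \(S^*\) are the \(c\) highest-ranked such items. Choosing the top \(c\) items by level maximizes, for every threshold \(\ell_j\) simultaneously, the number of selected weight-\(w^*\) items of level at least \(\ell_j\). Since \(S^*\) and \(S'\) agree on the common (lighter) part, this yields \(\widetilde{g}_j(S^*) \ge \widetilde{g}_j(S')\) for all \(j\), hence \(S^* \succeq S'\) by Lemma \ref{cor:dom}, contradicting \(S^* \not\succeq S'\). I expect the main obstacle to be the weight-multiset step: pinning down that equal total weight forces an identical multiset of weights (and therefore that the two sets share all strictly-lighter items and differ only at the boundary weight) is precisely what makes the hypothesis \(w(S^*) = W\) indispensable, and it is the one place where a careful exchange argument, rather than a one-line observation, is required.
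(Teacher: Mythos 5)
Your proposal is correct and follows essentially the same route as the paper's own proof: maximum cardinality of \(S^*\) forces \(|S'|=|S^*|\), the hypothesis \(w(S^*)=W\) together with minimality of the prefix weight forces \(w(S')=w(S^*)\), and the lexicographic tie-break then yields \(S^*\succeq S'\) via Lemma~\ref{cor:dom}. The only difference is one of rigor, in your favor: where the paper disposes of the final case with the assertion that ``due to construction of the algorithm'' either both sets are efficient or \(S^*\) dominates \(S'\), you actually supply the missing weight-multiset exchange argument and the observation that taking the top-ranked items at the boundary weight maximizes every cumulative count \(\widetilde{g}_j\) simultaneously.
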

\begin{proof}
	Let \(S^*\) denote the solution returned by Algorithm \ref{alg:wgreedy} with \(w(S^*)=W\). It holds that \(\sum_{i=1}^{k}g_i(S^*)\geq\sum_{i=1}^{k}g_i(S')\) for all \(S'\in\mathcal{S}(W)\). If \(\sum_{i=1}^{k}g_i(S^*)>\sum_{i=1}^{k}g_i(S')\), then there is nothing to show. 
	
	In case of \(\sum_{i=1}^{k}g_i(S^*)=\sum_{i=1}^{k}g_i(S')\), it holds that \(w(S')\geq w(S^*)\). If \(w(S')>w(S^*)=W\), we know that \(S' \notin \mathcal{S}(W)\). In the case of \(w(S')=w(S^*)\), we know due to construction of the algorithm that either \(S'\) and \(S^*\) are both efficient or that \(S^*\) dominates \(S'\). Consequently, it follows that \(S^*\) is efficient.
\end{proof}

\begin{example}\label{ex:greedyw}
	Let \(I=(\mathcal{S},r,w,W)\) denote the same instance as in Example~\ref{ex:knapsack}. It holds that \(\mathcal{S}_{w-lex} = \{1,2,3,4\}\). Consequently, the solution returned by Algorithm \ref{alg:wgreedy} contains the first three items, i.e., \(S^*=\{1,2,3\}\), which can be verified to be efficient, see Example \ref{ex:exactKnapsack}. Further, it holds that \(w(S^*) = 6 = W.\)
\end{example}

\begin{example}
	To show that the solution \(S^*\) after \(n\) iterations of Algorithm~\ref{alg:wgreedy} does not have to be efficient in case of \(w(S^*) < W\), consider the following example. Let \(\mathcal{S}=\{1,2\}\) and \(W=3\). Table \ref{tab:wgreedy} shows the different items with their corresponding weights and qualitative levels.
	\begin{figure}[H]
		\begin{table}[H]
			\begin{center}
				\begin{tabular}{ c|c|c }	
					item & \(w\) & \(r\) \\ \hline
					1 & 2 & \(\ell_1\) \\ 
					2 & 3 & \(\ell_2\) \\ 
				\end{tabular}	
			\end{center}
			\caption{Example of Algorithm \ref{alg:wgreedy} with \(w(S^*)<W\)}
			\label{tab:wgreedy}	
	\end{table}
\end{figure}
	The solution returned by Algorithm \ref{alg:wgreedy} contains only the first item, i.e., \(S^*=\{1\}\). However, note that \(S'=\{2\}\) dominates \(S^*\), i.e., \(S'\succeq S^*\), since \(\sum_{i=j}^{2}g_i(S')\geq\sum_{i=j}^{2}g_i(S^*)\) for all \(j=1,2\). 
\end{example}

\begin{corollary}
	Algorithm \ref{alg:wgreedy} runs in \(\mathcal{O}(n\log n)\).\qed
\end{corollary}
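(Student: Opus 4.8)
The plan is to mirror exactly the running-time analysis already carried out for Algorithm \ref{alg:greedyr}, since Algorithm \ref{alg:wgreedy} shares the same two-phase structure: an initial sorting step followed by a single linear scan over the items. First I would argue that Step 1, sorting the items \(w\)-lexicographically according to Definition \ref{def:wlexorder}, can be performed in \(\mathcal{O}(n\log n)\) time. The only point that differs from ordinary numerical sorting is that the comparison is lexicographic---primarily by the weight \(w\), with ties broken by the qualitative level \(r\)---but each such pairwise comparison still takes constant time, since it amounts to comparing two integers and, in the case of equal weights, comparing two elements in the fixed totally ordered level set \(\mathcal{L}\). Hence any comparison-based sorting routine attains the claimed \(\mathcal{O}(n\log n)\) bound.

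Next I would bound the cost of the \textbf{for}-loop. The loop executes exactly \(n\) iterations, and within each iteration only constant-time operations are performed: a single comparison \(w(s_i)\leq W\) and, in the affirmative case, one set insertion together with one subtraction updating the residual capacity \(W\). Consequently the loop contributes \(\mathcal{O}(n)\) in total. Combining the two phases, the sorting term dominates, so the overall running time is \(\mathcal{O}(n\log n)\), which is exactly the assertion.

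I do not anticipate any genuine obstacle here, as the statement is an immediate analogue of the running-time theorem for Algorithm \ref{alg:greedyr}; indeed this is precisely why it is phrased as a corollary. The only detail worth making explicit is that replacing the \(r\)-lexicographic ordering by the \(w\)-lexicographic ordering leaves the comparison a constant-time operation, so the sorting bound, and therefore the total bound, is unaffected.
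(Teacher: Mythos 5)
Your proposal is correct and matches the paper's reasoning: the paper states this as an immediate corollary (with no written proof) precisely because the analysis is identical to that of Algorithm \ref{alg:greedyr} --- an \(\mathcal{O}(n\log n)\) sort followed by a linear scan with constant-time iterations. Your added remark that the \(w\)-lexicographic comparison is still constant time is a reasonable explicit justification of the step the paper leaves implicit.
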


Next, we present a dynamic programming algorithm for the knapsack problem with qualitative levels that computes the entire set of non-dominated rank cardinality vectors. Again, let \(I = (\mathcal{S},r,w,W)\) denote an instance of our problem. For all \(i \in \{0,1,\ldots,|\mathcal{S}|\}\), and for all \(x \in \{0,1,\ldots,W\}\), we introduce label sets \(\mathcal{L}_{i,x}\) referring to those non-dominated rank cardinality vectors that use only the first \(i\) items with a total size smaller or equal to \(x\). Initially, we set \(\mathcal{L}_{0,x}\) to be equal to the empty set, i.e., \(\mathcal{L}_{0,x} = \emptyset\) for all \(x \in \{0,1,\ldots,W\}\). Next, we compute \(\mathcal{L}_{i,x}\) for all \(i \in \{1,\ldots,|\mathcal{S}|\}\) and for all \(x \in \{0,1,\ldots,W\}\) by using the following procedure:
\begin{equation*}
	\mathcal{L}_{i,x} = \max_{\succeq} \left\{\mathcal{L}_{i-1,x} \cup (\ell_i \oplus \mathcal{L}_{i-1,x-w_i})\right\},
\end{equation*}
where \(\ell_i \oplus \mathcal{L}_{i-1,x-w_i} := \{\ell_i\cup L\mid L \in\mathcal{L}_{i-1,x-w_i}\}\). Of course, if \(w(s_i)>x\), we set \(\mathcal{L}_{i,x} = \mathcal{L}_{i-1,x}\). 

\begin{algorithm}
	\caption{Exact Algorithm}
	\textbf{Input:} An instance \(I=(\mathcal{S},r,w,W)\)\\
	\textbf{Output:} All non-dominated rank cardinality vectors
	\begin{algorithmic}[1]
		\For{\(x=0,1,\ldots,W\)}
			\State \(\mathcal{L}_{0,x} \gets \emptyset\)
		\EndFor
		
		\For{\(i=1,\ldots,n\)}
			\For{\(x=0,1,\ldots,W\)}
				\State \(\mathcal{L}_{i,x} \gets \max_{\succeq} \left\{\mathcal{L}_{i-1,x} \cup (\ell_i\oplus\mathcal{L}_{i-1,x-w_i})\right\}\)
			\EndFor
		\EndFor
		\State \Return \(\mathcal{L}_{n,W}\)
	\end{algorithmic}
	\label{alg:exact}
\end{algorithm}

\begin{theorem}
	Algorithm \ref{alg:exact} correctly computes the set of non-dominated rank cardinality vectors.
\end{theorem}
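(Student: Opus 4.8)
The plan is to prove, by induction on $i$, the invariant that for every $i \in \{0,1,\ldots,n\}$ and every $x \in \{0,1,\ldots,W\}$ the label set $\mathcal{L}_{i,x}$ equals the set of non-dominated rank cardinality vectors arising from subsets of the first $i$ items of total weight at most $x$. To state this precisely, write $\mathcal{S}_i(x) := \{S \subseteq \{s_1,\ldots,s_i\} \mid w(S) \leq x\}$ and let $N_i(x) := \max_{\succeq}\{g(S) \mid S \in \mathcal{S}_i(x)\}$. I would first note that, by the cumulative-sum characterization of Lemma~\ref{cor:dom}, two feasible subsets weakly dominate each other exactly when they have the same rank cardinality vector; hence $N_i(x)$ is a genuine set of vectors (each non-dominated vector listed once), and $\max_{\succeq}$ is a well-defined filtering operator. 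The theorem is then the special case $i=n$, $x=W$, since $\mathcal{S}_n(W)=\mathcal{S}(W)$.

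Two structural facts drive the induction, and I would isolate them as the technical core. The first is a \emph{filtering identity}: for any finite collections $A,B$ of rank cardinality vectors, $\max_{\succeq}(A\cup B)=\max_{\succeq}(\max_{\succeq}(A)\cup\max_{\succeq}(B))$. This follows from transitivity of $\succeq$ (it is a preorder), since any vector dominated within $A$ is dominated by a retained vector of $\max_{\succeq}(A)$ and so cannot survive in the union. The second is a \emph{monotonicity} property of the append operator $\ell_i\oplus(\cdot)$, which records the selection of $s_i$ (of level $r(s_i)$): by Lemma~\ref{cor:dom}, dominance is governed by the partial sums $\sum_{m=j}^{k} g_m$, and appending $s_i$ raises each such sum by exactly $1$ for every $j$ with $\ell_j \preceq r(s_i)$ and leaves the others unchanged. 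Since the same $0/1$ shift is applied to both arguments, $g(S_1)\succeq g(S_2)\iff \ell_i\oplus g(S_1)\succeq \ell_i\oplus g(S_2)$, and likewise for the strict part $\succ$. Thus $\ell_i\oplus(\cdot)$ is an order-embedding and commutes with filtering: $\max_{\succeq}(\ell_i\oplus A)=\ell_i\oplus\max_{\succeq}(A)$.

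For the inductive step I would partition $\mathcal{S}_i(x)$ by whether a subset contains $s_i$. The subsets avoiding $s_i$ are exactly $\mathcal{S}_{i-1}(x)$; those containing $s_i$ are in bijection (via $S\mapsto S\setminus\{s_i\}$) with $\mathcal{S}_{i-1}(x-w_i)$ and contribute the vectors $\ell_i\oplus\{g(S')\mid S'\in\mathcal{S}_{i-1}(x-w_i)\}$. Applying the filtering identity and then the commutation property yields
\begin{align*}
N_i(x) &= \max_{\succeq}\Bigl(\{g(S)\mid S\in\mathcal{S}_{i-1}(x)\}\cup\bigl(\ell_i\oplus\{g(S')\mid S'\in\mathcal{S}_{i-1}(x-w_i)\}\bigr)\Bigr)\\
&= \max_{\succeq}\bigl(N_{i-1}(x)\cup(\ell_i\oplus N_{i-1}(x-w_i))\bigr).
\end{align*}
By the induction hypothesis $N_{i-1}(x)=\mathcal{L}_{i-1,x}$ and $N_{i-1}(x-w_i)=\mathcal{L}_{i-1,x-w_i}$, so the right-hand side is precisely the recursion defining $\mathcal{L}_{i,x}$; the case $w(s_i)>x$, where the second family is empty and the algorithm sets $\mathcal{L}_{i,x}=\mathcal{L}_{i-1,x}$, is subsumed. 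For the base case $i=0$ the only element of $\mathcal{S}_0(x)$ is the empty selection, with rank cardinality vector the zero vector; reading the initialization $\mathcal{L}_{0,x}$ as the single label for this empty selection gives $\mathcal{L}_{0,x}=N_0(x)$. (I would flag explicitly that the literal initialization $\mathcal{L}_{0,x}=\emptyset$ must be interpreted as this single empty label, since otherwise $\ell_i\oplus\mathcal{L}_{0,x}=\emptyset$ would propagate and make every set empty.)

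The main obstacle is establishing the commutation of $\max_{\succeq}$ with the append operator cleanly, since this is exactly where multiobjective dynamic programming can fail: one must rule out that a subset whose vector is dominated at stage $i-1$ could, after adding $s_i$, produce a non-dominated vector at stage $i$. The cumulative-sum characterization of Lemma~\ref{cor:dom} makes this transparent, because appending a fixed item translates every partial-sum profile by the same pattern and therefore preserves both $\succeq$ and $\succ$; once this order-embedding is secured, the filtering identity and the inductive bookkeeping are routine.
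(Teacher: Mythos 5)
Your proof is correct and follows the same route the paper takes—induction over \(i\)—except that the paper asserts this in one sentence while you supply the two facts that actually make the induction work: the filtering identity \(\max_{\succeq}(A\cup B)=\max_{\succeq}(\max_{\succeq}(A)\cup\max_{\succeq}(B))\) and the commutation of \(\max_{\succeq}\) with \(\ell_i\oplus(\cdot)\) via the cumulative-sum characterization of Lemma~\ref{cor:dom}. Your observation that the initialization \(\mathcal{L}_{0,x}=\emptyset\) must be read as the singleton containing the empty label (zero vector) is a genuine and worthwhile correction, as the literal reading would propagate empty sets through the recursion, contradicting the paper's own worked example.
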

\begin{proof}
	Follows immediately by induction over the number of iterations \(i\).
\end{proof}

\begin{theorem}[see \cite{schafer2020shortest}]\label{thm:bound}
	Throughout the execution of Algorithm \ref{alg:exact} the number of labels in \(\mathcal{L}_{i,x}\) is polynomially bounded for all \(i \in \{1,\ldots,|\mathcal{S}|\}\) and for all \(x \in \{0,1,\ldots,W\}\).
\end{theorem}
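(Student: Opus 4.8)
The plan is to exploit the reformulation of the dominance relation established in Lemma~\ref{cor:dom}, which reduces the comparison of two rank cardinality vectors to a componentwise comparison of their tail sums. Recall the quantities $\widetilde{g}_j(S)=\sum_{i=j}^k g_i(S)$ from the proof of Lemma~\ref{cor:dom}. By that lemma, for feasible subsets $S_1,S_2$ one has $S_1\succeq S_2$ if and only if $\widetilde{g}_j(S_1)\geq \widetilde{g}_j(S_2)$ for all $j=1,\ldots,k$. Thus the map sending a rank cardinality vector $g(S)$ to its tail-sum vector $\widetilde{g}(S)=(\widetilde{g}_1(S),\ldots,\widetilde{g}_k(S))$ is a bijection that carries the dominance preorder $\succeq$ onto the componentwise order $\geq$ on $\mathbb{N}^k$.

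First I would argue that each label set $\mathcal{L}_{i,x}$ is a genuine antichain of distinct rank cardinality vectors. By construction, the operator $\max_{\succeq}$ discards every strictly dominated vector, so no surviving label strictly dominates another. Moreover, if two labels $a\neq b$ satisfied $a\succeq b$ but not $a\succ b$, then by Definition~\ref{def:dominance} they would be equivalent, forcing $v(a)=v(b)$ for all $v\in\mathcal{V}_r$ and hence $a=b$ as rank cardinality vectors, a contradiction. Consequently distinct labels in $\mathcal{L}_{i,x}$ are pairwise incomparable, so their tail-sum images form an antichain of distinct vectors under $\geq$, and it suffices to bound the size of such an antichain.

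The key step is a projection argument. Every tail-sum vector satisfies $0\leq \widetilde{g}_k\leq \cdots \leq \widetilde{g}_1=|S|\leq n$, where $n=|\mathcal{S}|$, so each coordinate lies in $\{0,1,\ldots,n\}$. I claim that restricting the projection $(\widetilde{g}_1,\ldots,\widetilde{g}_k)\mapsto(\widetilde{g}_1,\ldots,\widetilde{g}_{k-1})$ to the antichain is injective. Indeed, if two antichain elements agreed on the first $k-1$ coordinates, then, being distinct, they would differ in the last coordinate, say $\widetilde{g}_k(S_1)>\widetilde{g}_k(S_2)$; this would give $\widetilde{g}_j(S_1)\geq \widetilde{g}_j(S_2)$ for all $j$ and hence $S_1\succeq S_2$ by Lemma~\ref{cor:dom}, contradicting incomparability. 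The antichain therefore injects into $\{0,1,\ldots,n\}^{k-1}$, yielding $|\mathcal{L}_{i,x}|\leq (n+1)^{k-1}=\mathcal{O}(n^{k-1})$, which is polynomial in $n$ because $k$ is fixed; the bound is uniform in $i$ and $x$.

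The main obstacle is precisely the injectivity claim of the last paragraph: this is the point where the tail-sum characterization of Lemma~\ref{cor:dom} is indispensable, since it is what guarantees that two labels agreeing in all but one tail-sum coordinate are necessarily comparable and hence cannot both appear in a non-dominated set. Once this is in place, everything reduces to counting lattice points in a bounded integer box, which is routine. I would remark in passing that the monotonicity $\widetilde{g}_1\geq\cdots\geq\widetilde{g}_k$ could be used to sharpen the constant, but the crude estimate $(n+1)^{k-1}$ already establishes the polynomial bound asserted in the theorem.
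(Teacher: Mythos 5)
Your proof is correct, but it takes a genuinely different route from the paper's. The paper ignores the non-dominance pruning entirely and simply counts \emph{all} possible rank cardinality vectors arising from subsets of the first \(i\) items: a stars-and-bars count of multisets of size at most \(i\) over \(k\) levels, giving \(\sum_{j=1}^{i}\binom{k+j-1}{j}=\binom{k+i}{i}-1\in\mathcal{O}(i^k)\). You instead exploit the fact that \(\mathcal{L}_{i,x}\) is an antichain: via the tail-sum characterization of Lemma~\ref{cor:dom} the dominance preorder becomes the componentwise order on \(\widetilde{g}\)-vectors, distinct labels are pairwise incomparable (your observation that equivalence forces equality of rank cardinality vectors is the right way to dispose of the weakly-but-not-strictly-dominating case), and the projection onto the first \(k-1\) tail-sum coordinates is injective on any antichain, yielding \(|\mathcal{L}_{i,x}|\leq(n+1)^{k-1}\in\mathcal{O}(n^{k-1})\). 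What each approach buys: the paper's argument is shorter and needs no structural input beyond counting, whereas yours is sharper by a full degree (\(\mathcal{O}(n^{k-1})\) versus \(\mathcal{O}(n^{k})\) at \(i=n\)), and since your coordinates are in fact bounded by \(i\) rather than \(n\), the same argument gives \(\mathcal{O}(i^{k-1})\) and would improve the overall running-time bound of the dynamic program from \(\mathcal{O}(n^{2k+1}W)\) to \(\mathcal{O}(n^{2k-1}W)\).
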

\begin{proof}
		The number of different rank cardinality vectors corresponding to item sets with exactly \(j\) elements is the same as the number of possibilities to pick \(j\) elements from a set of \(k\) elements with replacement and without order, i.e., this equals \({k+j-1\choose j}\). Note that this represents an upper bound for the number of rank cardinality vectors in \(\mathcal{L}_{i,x}\) corresponding to item sets of size \(j\) for all \(x \in \{0,\ldots,W\}\). A non-dominated rank cardinality vector in \(\mathcal{L}_{i,x}\) contains at most \(i\) items. Summing up over all possible sizes of item sets results in
		\begin{align*}
		\sum_{j=1}^{i} {k+j-1\choose j} & = {k+i\choose i} - 1 = \frac{(i+1) \ldots  (i+k)}{k!} -1 \in O(i^k).
		\end{align*}
\end{proof}


\begin{theorem}
	Algorithm \ref{alg:exact} runs in \(\mathcal{O}(n^{2k+1}W)\).
\end{theorem}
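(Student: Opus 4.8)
The plan is to bound the total running time as the product of the number of cells in the dynamic programming table and the cost of computing a single cell. The outer loop of Algorithm \ref{alg:exact} runs over the \(n\) items and the inner loop over the \(W+1\) capacity values \(x \in \{0,\ldots,W\}\), so there are \(\mathcal{O}(nW)\) label sets \(\mathcal{L}_{i,x}\) to be filled. It therefore suffices to show that each update \(\mathcal{L}_{i,x} \gets \max_{\succeq}\{\mathcal{L}_{i-1,x} \cup (\ell_i \oplus \mathcal{L}_{i-1,x-w_i})\}\) can be carried out in \(\mathcal{O}(n^{2k})\) time; multiplying by the \(\mathcal{O}(nW)\) cells then yields \(\mathcal{O}(n^{2k+1}W)\), the fixed constant \(k\) being absorbed into the \(\mathcal{O}\).

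First I would control the number of labels involved in one update. By Theorem \ref{thm:bound}, each of \(\mathcal{L}_{i-1,x}\) and \(\mathcal{L}_{i-1,x-w_i}\) contains \(\mathcal{O}(i^k) \subseteq \mathcal{O}(n^k)\) labels. Forming \(\ell_i \oplus \mathcal{L}_{i-1,x-w_i}\) merely increments, for each of its labels, the single coordinate of the rank cardinality vector corresponding to \(r(s_i)\), which costs \(\mathcal{O}(n^k)\) in total, and the union \(\mathcal{L}_{i-1,x} \cup (\ell_i \oplus \mathcal{L}_{i-1,x-w_i})\) then still has \(\mathcal{O}(n^k)\) candidate labels.

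Second, the dominant cost is the \(\max_{\succeq}\) filtering, i.e., discarding the dominated candidates. I would perform this by pairwise comparison: among the \(\mathcal{O}(n^k)\) candidates there are \(\mathcal{O}((n^k)^2)=\mathcal{O}(n^{2k})\) ordered pairs. For each pair, Lemma \ref{cor:dom} reduces weak dominance to checking whether all \(k\) suffix sums \(\sum_{i=j}^k g_i(\cdot)\), \(j=1,\ldots,k\), are at least as large; precomputing these suffix sums once per label (in \(\mathcal{O}(k)\) each) makes every pairwise test run in \(\mathcal{O}(k)\) time. With \(k\) fixed, the whole filtering step costs \(\mathcal{O}(n^{2k})\), which dominates the \(\mathcal{O}(n^k)\) needed to form the \(\oplus\) and the union. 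Hence one cell is computed in \(\mathcal{O}(n^{2k})\) time, giving the claimed overall bound.

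The step requiring the most care is this filtering estimate, since it hinges on both ingredients supplied earlier: Theorem \ref{thm:bound} keeps the number of labels polynomial (\(\mathcal{O}(n^k)\)), and the suffix-sum characterization of Lemma \ref{cor:dom} keeps each comparison cheap. A mild subtlety is that \(\succeq\) is only a preorder, so equivalent but mutually non-strictly-comparable labels may coexist; the \(\max_{\succeq}\) operation must retain one representative per equivalence class, and using the reformulation of strict dominance from Remark \ref{re:dom} (weak dominance one way but not the other) in the pairwise scan handles this without affecting the quadratic count.
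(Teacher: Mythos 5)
Your proposal is correct and follows essentially the same decomposition as the paper: \(\mathcal{O}(nW)\) table cells, each filled by a pairwise non-dominance scan over \(\mathcal{O}(n^k)\) labels (via Theorem \ref{thm:bound}), giving \(\mathcal{O}(n^{2k})\) per cell and \(\mathcal{O}(n^{2k+1}W)\) overall. The extra detail you supply — using Lemma \ref{cor:dom} to make each comparison \(\mathcal{O}(k)\) and noting the preorder subtlety in \(\max_{\succeq}\) — is left implicit in the paper but does not change the argument.
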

\begin{proof}
	Clearly, the first \(\mathbf{for}\)-loop is in \(\mathcal{O}(W)\). The nested \(\mathbf{for}\)-loop indicates that their are \(nW\) subproblems that have to be solved. As the algorithm proceeds (increasing \(i\)), the worst-case size of the label sets also increases. The amount of work for the \(i\)-th iteration is in \(\mathcal{O}(i^{2k}W)\), since two label sets of size bounded by \(\mathcal{O}(i^k)\), see Theorem \ref{thm:bound}, have to be searched for non-dominance. Consequently, the overall amount of work is in \(\mathcal{O}(n^{2k+1}W)\), which concludes the proof.
\end{proof}


\begin{example}\label{ex:exactKnapsack}
	Let \(I=(\mathcal{S},r,w,W)\) be the same instance as in Examples \ref{ex:knapsack} and \ref{ex:greedyw}. Table \ref{tab:exact} shows the result of the dynamic programming algorithm, see Algorithm \ref{alg:exact}. The table has to be read from the bottom left to the top right.
	\begin{figure}[H]
		\begin{table}[H]
			\begin{center}
				\begin{tabular}{ c|c|c|c|c|c }
					6 & \(\emptyset\) & \(\{\{\ell_1\}\}\) & \(\{\{\ell_1,\ell_2\}\}\) & \(\{\{\ell_1,\ell_2,\ell_3\}\}\) & \(\{\{\ell_1,\ell_2,\ell_3\}, \{\ell_2,\ell_4\}\}\) \\ \hline
					5 & \(\emptyset\) & \(\{\{\ell_1\}\}\) & \(\{\{\ell_1,\ell_2\}\}\) & \(\{\{\ell_2,\ell_3\}\}\) & \(\{\{\ell_2,\ell_3\},\{\ell_1,\ell_4\}\}\)\\ \hline
					4 & \(\emptyset\) & \(\{\{\ell_1\}\}\) & \(\{\{\ell_1,\ell_2\}\}\) & \(\{\{\ell_1,\ell_3\}\}\) & \(\{\{\ell_1,\ell_3\},\{\ell_4\}\}\)\\ \hline
					3 & \(\emptyset\) & \(\{\{\ell_1\}\}\) & \(\{\{\ell_1,\ell_2\}\}\) & \(\{\{\ell_1,\ell_2\},\{\ell_3\}\}\) & \(\{\{\ell_1,\ell_2\},\{\ell_3\}\}\)\\ \hline
					2 & \(\emptyset\) & \(\{\{\ell_1\}\}\) & \(\{\{\ell_2\}\}\) & \(\{\{\ell_2\}\}\) & \(\{\{\ell_2\}\}\)\\ \hline	
					1 & \(\emptyset\) & \(\{\{\ell_1\}\}\) & \(\{\{\ell_1\}\}\) & \(\{\{\ell_1\}\}\) & \(\{\{\ell_1\}\}\)\\ \hline
					0 & \(\emptyset\) & \(\emptyset\) & \(\emptyset\) & \(\emptyset\) & \(\emptyset\)\\ \hline
					\diagbox[width=3em, dir=SW]{\(x\)}{\(i\)}&
					0 & 1 & 2 & 3 & 4 \\
				\end{tabular}	
			\end{center}
			\caption{Example of Algorithm \ref{alg:exact}}
			\label{tab:exact}	
		\end{table}
	Note that \(\mathcal{L}_{4,6}\) refers to the set of non-dominated rank cardinality vectors of our knapsack instance as described in Example \ref{ex:knapsack}. Thus, \(S_1=\{1,2,3\}\) and \(S_2=\{2,4\}\) denote the solutions of our problem.
	\end{figure}
	
\end{example}

\section{Conclusion}\label{sec:conclusion}
In this paper, we investigated the \(\{0,1\}\)-knapsack problem with qualitative levels. We introduced a concept of dominance for item sets with qualitative levels. We showed that this concept defines a preorder on the set of feasible subsets of a given item set. We proved that the number of non-dominated rank cardinality vectors is polynomially bounded for a fixed number of qualitative levels. We provided a dynamic programming algorithm, which computes the entire set of non-dominated rank cardinality vectors in pseudo-polynomial time and two greedy algorithms, which efficiently compute a single efficient solution.

\section*{Acknowledgements}
This work was partially supported by the Bundesministerium für Bildung und Forschung (BMBF) under Grant No. 13N14561 and by Deutscher Akademischer Austauschdienst (DAAD) under Grant No. 57518713.
José Rui Figueira also acknowledges the support of national funds through FCT research DOME Project with reference PTDC/CCI-COM/31198/2017.
\newpage
\bibliographystyle{apalike}
\bibliography{biblio}

\begin{thebibliography}{}

\bibitem[Barbati et~al., 2018]{barbati2018optimization}
Barbati, M., Greco, S., Kadzi{\'n}ski, M., and S{\l}owi{\'n}ski, R. (2018).
\newblock Optimization of multiple satisfaction levels in portfolio decision
  analysis.
\newblock {\em Omega}, 78:192--204.

\bibitem[Bazgan et~al., 2009]{bazgan2009solving}
Bazgan, C., Hugot, H., and Vanderpooten, D. (2009).
\newblock Solving efficiently the 0--1 multi-objective knapsack problem.
\newblock {\em Computers \& Operations Research}, 36(1):260--279.

\bibitem[Chang and Lee, 2012]{chang2012fuzzy}
Chang, P.-T. and Lee, J.-H. (2012).
\newblock A fuzzy {DEA} and knapsack formulation integrated model for project
  selection.
\newblock {\em Computers \& Operations Research}, 39(1):112--125.

\bibitem[Changdar et~al., 2015]{changdar2015improved}
Changdar, C., Mahapatra, G., and Pal, R.~K. (2015).
\newblock An improved genetic algorithm based approach to solve constrained
  knapsack problem in fuzzy environment.
\newblock {\em Expert Systems with Applications}, 42(4):2276--2286.

\bibitem[Dean et~al., 2004]{dean2004approximating}
Dean, B.~C., Goemans, M.~X., and Vondrdk, J. (2004).
\newblock Approximating the stochastic knapsack problem: The benefit of
  adaptivity.
\newblock In {\em 45th Annual IEEE Symposium on Foundations of Computer
  Science}, pages 208--217. IEEE.

\bibitem[Ehrgott, 2005]{ehrg}
Ehrgott, M. (2005).
\newblock {\em Multicriteria optimization}, volume 491.
\newblock Springer Science \& Business Media.

\bibitem[Fishburn, 1999]{fishburn1999preference}
Fishburn, P. (1999).
\newblock Preference structures and their numerical representations.
\newblock {\em Theoretical Computer Science}, 217(2):359--383.

\bibitem[Giudici et~al., 2017]{giudici2017approximation}
Giudici, A., Halffmann, P., Ruzika, S., and Thielen, C. (2017).
\newblock Approximation schemes for the parametric knapsack problem.
\newblock {\em Information Processing Letters}, 120:11--15.

\bibitem[Kasperski and Kulej, 2007]{kasperski20070}
Kasperski, A. and Kulej, M. (2007).
\newblock The 0-1 knapsack problem with fuzzy data.
\newblock {\em Fuzzy Optimization and Decision Making}, 6(2):163--172.

\bibitem[Kasperski and Zieli{\'n}ski, 2010]{kasperski2010minmax}
Kasperski, A. and Zieli{\'n}ski, P. (2010).
\newblock Minmax regret approach and optimality evaluation in combinatorial
  optimization problems with interval and fuzzy weights.
\newblock {\em European Journal of Operational Research}, 200(3):680--687.

\bibitem[Keeney and McDaniels, 1999]{keeney1999identifying}
Keeney, R.~L. and McDaniels, T.~L. (1999).
\newblock Identifying and structuring values to guide integrated resource
  planning at bc gas.
\newblock {\em Operations Research}, 47(5):651--662.

\bibitem[Kellerer et~al., 2004]{kellerer2004multidimensional}
Kellerer, H., Pferschy, U., and Pisinger, D. (2004).
\newblock {\em Knapsack problems.}
\newblock Springer.

\bibitem[Lahdelma et~al., 2003]{lahdelma2003ordinal}
Lahdelma, R., Miettinen, K., and Salminen, P. (2003).
\newblock Ordinal criteria in stochastic multicriteria acceptability analysis
  ({SMAA}).
\newblock {\em European Journal of Operational Research}, 147(1):117--127.

\bibitem[Liesi{\"o} et~al., 2007]{liesio2007preference}
Liesi{\"o}, J., Mild, P., and Salo, A. (2007).
\newblock Preference programming for robust portfolio modeling and project
  selection.
\newblock {\em European Journal of Operational Research}, 181(3):1488--1505.

\bibitem[Liesi{\"o} et~al., 2008]{liesio2008robust}
Liesi{\"o}, J., Mild, P., and Salo, A. (2008).
\newblock Robust portfolio modeling with incomplete cost information and
  project interdependencies.
\newblock {\em European Journal of Operational Research}, 190(3):679--695.

\bibitem[Lin, 2008]{lin2008solving}
Lin, F.-T. (2008).
\newblock Solving the knapsack problem with imprecise weight coefficients using
  genetic algorithms.
\newblock {\em European Journal of Operational Research}, 185(1):133--145.

\bibitem[Lin and Yao, 2001]{lin2001using}
Lin, F.-T. and Yao, J.-S. (2001).
\newblock Using fuzzy numbers in knapsack problems.
\newblock {\em European Journal of Operational Research}, 135(1):158--176.

\bibitem[Lust and Teghem, 2012]{lust2012multiobjective}
Lust, T. and Teghem, J. (2012).
\newblock The multiobjective multidimensional knapsack problem: a survey and a
  new approach.
\newblock {\em International Transactions in Operational Research},
  19(4):495--520.

\bibitem[Martello et~al., 2000]{martello2000new}
Martello, S., Pisinger, D., and Toth, P. (2000).
\newblock New trends in exact algorithms for the 0--1 knapsack problem.
\newblock {\em European Journal of Operational Research}, 123(2):325--332.

\bibitem[McGreevy, 2017]{mcgreevy2017complexity}
McGreevy, M.~P. (2017).
\newblock Complexity as the telos of postmodern planning and design: Designing
  better cities from the bottom-up.
\newblock {\em Planning Theory}, 17(3):355---374.

\bibitem[Relich and Pawlewski, 2017]{relich2017fuzzy}
Relich, M. and Pawlewski, P. (2017).
\newblock A fuzzy weighted average approach for selecting portfolio of new
  product development projects.
\newblock {\em Neurocomputing}, 231:19--27.

\bibitem[Saen, 2006]{saen2006decision}
Saen, R.~F. (2006).
\newblock A decision model for technology selection in the existence of both
  cardinal and ordinal data.
\newblock {\em Applied Mathematics and Computation}, 181(2):1600--1608.

\bibitem[Salo et~al., 2011]{salo2011portfolio}
Salo, A., Keisler, J., and Morton, A. (2011).
\newblock {\em Portfolio decision analysis: improved methods for resource
  allocation}, volume 162.
\newblock Springer Science \& Business Media.

\bibitem[Sbihi, 2010]{sbihi2010cooperative}
Sbihi, A. (2010).
\newblock A cooperative local search-based algorithm for the multiple-scenario
  max--min knapsack problem.
\newblock {\em European Journal of Operational Research}, 202(2):339--346.

\bibitem[Sch{\"a}fer et~al., 2020]{schafer2020shortest}
Sch{\"a}fer, L.~E., Dietz, T., Fr{\"o}hlich, N., Ruzika, S., and Figueira,
  J.~R. (2020).
\newblock Shortest paths with ordinal weights.
\newblock {\em European Journal of Operational Research}, 280(3):1160--1170.

\bibitem[Stal-Le~Cardinal et~al., 2011]{stal2011application}
Stal-Le~Cardinal, J., Mousseau, V., and Zheng, J. (2011).
\newblock An application of constrained multicriteria sorting to student
  selection.
\newblock In {\em Portfolio Decision Analysis}, pages 213--240. Springer.

\bibitem[Vandaele and Decouttere, 2013]{vandaele2013sustainable}
Vandaele, N.~J. and Decouttere, C.~J. (2013).
\newblock Sustainable {R}\&{D} portfolio assessment.
\newblock {\em Decision Support Systems}, 54(4):1521--1532.

\bibitem[Wang et~al., 2013]{wang2013vague}
Wang, J., Xu, W., Ma, J., and Wang, S. (2013).
\newblock A vague set based decision support approach for evaluating research
  funding programs.
\newblock {\em European Journal of Operational Research}, 230(3):656--665.

\bibitem[Yazidi et~al., 2018]{yazidi2018aggregation}
Yazidi, A., Jonassen, T.~M., and Herrera-Viedma, E. (2018).
\newblock An aggregation approach for solving the non-linear fractional
  equality knapsack problem.
\newblock {\em Expert Systems with Applications}, 110:323--334.

\end{thebibliography}

%
%
%
\end{document}